\newtheorem{observation}{Observation}
\newtheorem{theorem}{Theorem}
\newtheorem{lemma}{Lemma}
\newtheorem{definition}{Definition}
\newtheorem{corollary}{Corollary}
\newcommand\xrowht[2][0]{\addstackgap[.5\dimexpr#2\relax]{\vphantom{#1}}}
\DeclareMathOperator{\arccot}{arccot}
\begin{document}

\title{On approximating shortest paths in weighted triangular tessellations}

\date{}

\author[1]{Prosenjit Bose\thanks{Partially supported by NSERC. Email: jit@scs.carleton.ca}}
\author[1,2]{Guillermo Esteban\thanks{Partially supported by project PID2019-104129GB-I00/MCIN/AEI/ 10.13039/501100011033 and H2020-MSCA-RISE project 734922 - CONNECT. Email: g.esteban@uah.es}}
\author[2]{David Orden\thanks{Partially supported by project PID2019-104129GB-I00/MCIN/AEI/ 10.13039/501100011033 and H2020-MSCA-RISE project 734922 - CONNECT. Email: david.orden@uah.es}}
\author[3]{Rodrigo I. Silveira\thanks{Partially supported by project PID2019-104129GB-I00/MCIN/AEI/ 10.13039/501100011033 and H2020-MSCA-RISE project 734922 - CONNECT. Email: rodrigo.silveira@upc.edu}}

\affil[1]{School of Computer Science, Carleton University, Canada}
\affil[2]{Departamento de F\'{i}sica y Matem\'{a}ticas, Universidad de Alcal\'{a}, Spain}
\affil[3]{Departament de Matemàtiques, Universitat Politècnica de Catalunya, Spain}

\maketitle

\begin{abstract}
    We study the quality of weighted shortest paths when a continuous 2-dimensional space is discretized by a weighted triangular tessellation. In order to evaluate how well the tessellation approximates the 2-dimensional space, we study three types of shortest paths: a weighted shortest path~$ \mathit{SP_w}(s,t) $, which is a shortest path from $ s $ to $ t $ in the space; a weighted shortest vertex path $ \mathit{SVP_w}(s,t) $, which is an any-angle shortest path; and a weighted shortest grid path~$ \mathit{SGP_w}(s,t) $, which is a shortest path whose edges are edges of the tessellation.
    Given any arbitrary weight assignment to the faces of a triangular tessellation, thus extending recent results by Bailey et al.~[\textit{Path-length analysis for grid-based path planning}. Artificial Intelligence, 301:103560, 2021], we prove upper and lower bounds on the ratios $ \frac{\lVert \mathit{SGP_w}(s,t)\rVert}{\lVert \mathit{SP_w}(s,t)\rVert} $, $ \frac{\lVert \mathit{SVP_w}(s,t)\rVert}{\lVert \mathit{SP_w}(s,t)\rVert} $, $ \frac{\lVert \mathit{SGP_w}(s,t)\rVert}{\lVert \mathit{SVP_w}(s,t)\rVert} $, which provide estimates on the quality of the approximation.
     It turns out, surprisingly, that our worst-case bounds are independent of any weight assignment. Our main result is that $ \frac{\lVert \mathit{SGP_w}(s,t)\rVert}{\lVert \mathit{SP_w}(s,t)\rVert} = \frac{2}{\sqrt{3}} \approx 1.15 $ in the worst case, and this is tight. As a corollary, for the weighted any-angle path $ \mathit{SVP_w}(s,t) $ we obtain the approximation result $ \frac{\lVert \mathit{SVP_w}(s,t)\rVert}{\lVert \mathit{SP_w}(s,t)\rVert} \lessapprox 1.15 $.
\end{abstract}

\section{Introduction}

Geometric shortest path problems, where the goal is to find an optimal path between two points $ s $ and $ t $ in a geometric setting, are fundamental for variety of real-world applications. For example, autonomous navigation over different types of terrain is a building block of an intelligent vehicle~\cite{li2021pairwise,shen2022fast,wagner2015subdimensional}. Artificial intelligence is also an important part in the design of video games~\cite{kamphuis,sturtevant2}; developers usually design the movement of non-player characters following the edges of the cells that decompose the space. See Figure~\ref{fig:colossal}, which shows how 2D triangular cells are used in the strategy game ``Colossal Citadels''~\cite{ColossalCitadels}. Moreover, finding paths of minimum cost is one of the major features in geographic information science~\cite{floriani}.

\begin{figure}[tb]
		\centering
		\begin{subfigure}[tb]{0.45\textwidth}
            \includegraphics[scale=0.158]{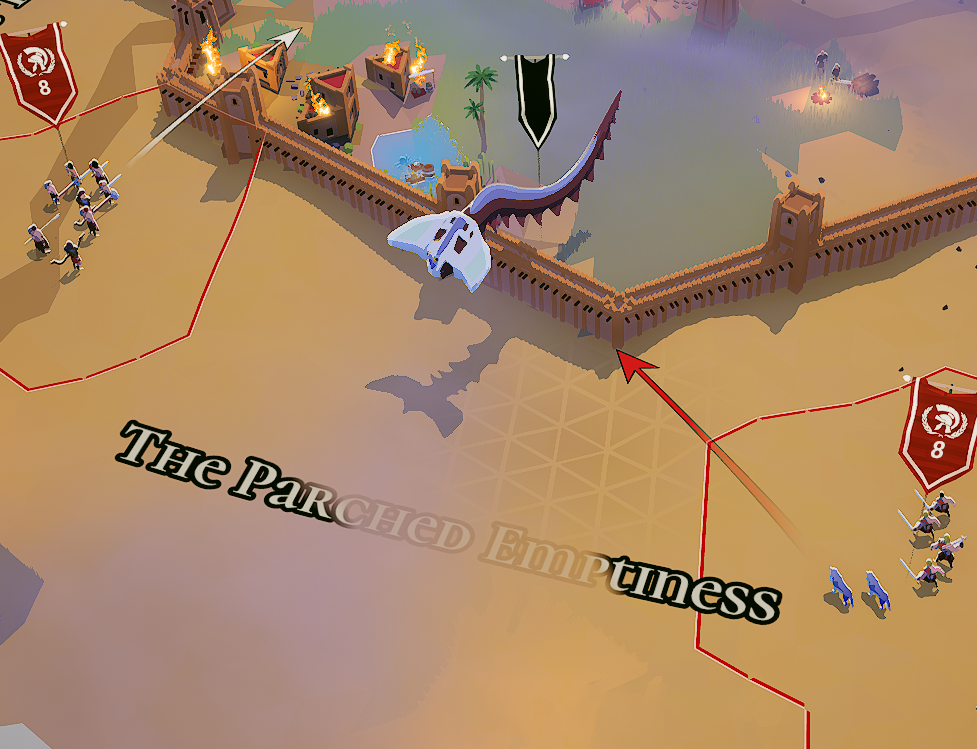}
     	\end{subfigure}
		\qquad
     	\begin{subfigure}[tb]{0.45\textwidth}
        	\includegraphics[scale=0.158]{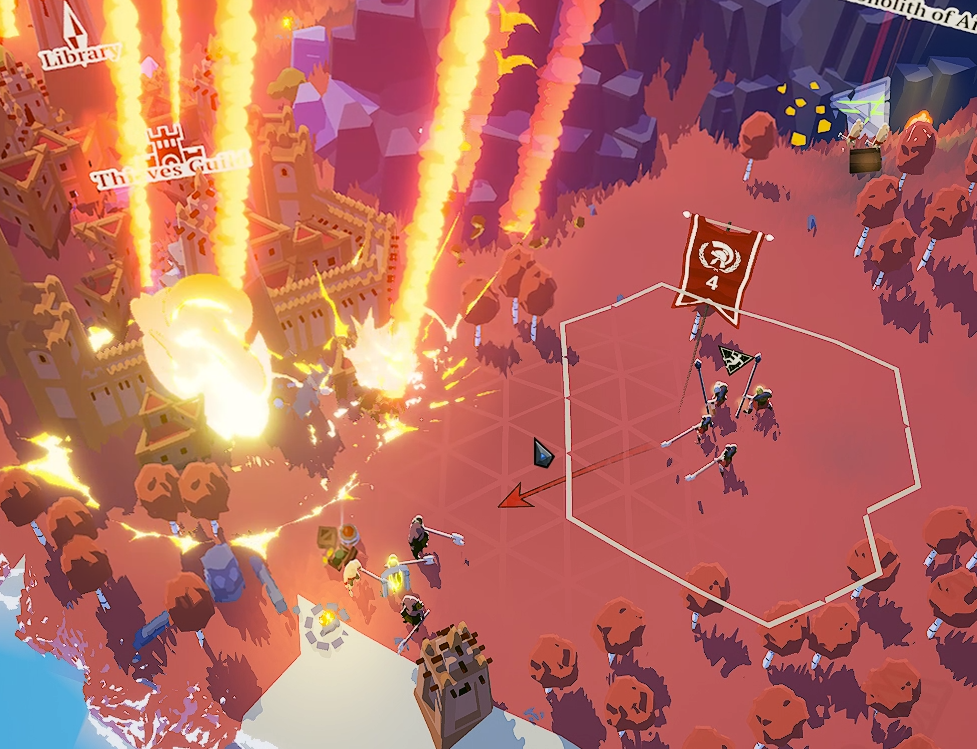}
     	\end{subfigure}
        \caption{Screenshots of the ``Colossal Citadels'' game by Uneven Dungeon. Used with permission from the author. Note the triangular grid underlying the scene.}
       	\label{fig:colossal}
	\end{figure}

One of the most general settings for geometric shortest path problems arises when the cost of traversing the plane varies depending on the region, that is, when the domain consists of a planar subdivision and each region $i$ of the subdivision has a weight~$\omega_i$, that represents the cost per unit of distance traveled in that region.

In gaming applications, this can be seen as an agent moving at different speeds when traversing a road, a dirt track, a forest, or a swamp area. Infrastructure planning takes into consideration planning, ecological and economic decision criteria. Hence, finding proper weights on a raster-based accumulated cost surface is crucial when placing power lines \cite{bachmann2018multi, hanssen2014least}, and pylons \cite{piveteau2017novel, santos2019optimizing}. It is also fundamental in the construction of highways and corridors \cite{seegmiller2021method}.
Thus, the cost of traversing a region is typically given by the Euclidean distance traversed in the region, multiplied by the corresponding weight.
The resulting metric is often called the \emph{weighted region metric}, and the problem of computing a shortest path between two points under this metric is known as the \emph{weighted region problem} (WRP)~\cite{mitchell1988algorithmic,Mitchell2}.

\subsection{Assumptions}

Applications usually require efficient and practical solutions for the WRP. Since an exact solution to the WRP is notoriously difficult, the problem is usually simplified in two ways.
First, the domain is approximated by using a (weighted) plane subdivision with a simpler structure.
Secondly, optimal shortest paths in that simpler subdivision are approximated.
The typical way to represent a 2D (or 3D) environment where shortest paths need to be computed is by using \emph{navigational meshes}~\cite{van2016comparative}.
These are polygonal subdivisions together with a graph that models the adjacency between the regions.
Path planning is then done first on the graph, to obtain a sequence of regions to be traversed, and then within each region, for which a shortest geometric path is extracted.

Triangles, convex polygons, disks or squares ---of different sizes--- are among the most frequently used region shapes~\cite{van2016comparative}. General navigational meshes allow efficient path planning in large environments as long as the region weights are limited to $\{1, \infty \}$ (i.e., free movement or obstacles), but when more weights $\{1, \omega_i,\ldots,\omega_j,\infty \}$ are needed (i.e., modeling different speeds for different types of ground), the complexity of computing the shortest path inside each region is most easily achieved through the use of the simplest possible navigational mesh: \emph{regular grids}.

In 2D, the only three types of regular polygons that can be used to tessellate continuous environments are triangles, squares and hexagons. The drawback with a grid is that it imposes a fixed resolution, requiring in general a large number of cells or regions. Still, grids are often used as navigational meshes (even for the simpler case of weights $\{1, \infty \}$), since they are easy to implement, are a natural choice for environments that are grid-based by design (e.g., many game designs, some robotic settings), and popular shortest path algorithms such as $A^*$ can be optimized for grids~\cite{harabor2016optimal,nagy,nash2007theta}.

\subsection{Definitions and notations}

Even when a regular grid is used as a navigational mesh, in practice, computing an exact weighted shortest path $\mathit{SP_w}(s,t)$ is difficult and, in fact, no exact algorithm exists for the WRP~\cite{Lou}: instead, in practice, one usually resorts to  approximations, by computing shortest paths on a weighted graph associated to the grid~\cite{Aleksandrov,Aleksandrov2,Aleksandrov3,ChengJV15}.

To this end, two different graphs have been considered in the literature~\cite{bailey2021path,kramm2018suboptimality,Nash,nash2007theta}, the \emph{corner-vertex graph} $ G_{\text{corner}} $ and the \emph{$ k $-corner grid graph} $ G_{k\text{corner}} $.

In $ G_{\text{corner}} $, the vertex set is the set of corners of the tessellation and every pair of vertices is connected by an edge. This graph is the complete graph over the set of vertices. Figure~\ref{fig:vertex-corner} depicts some of the neighbors of a vertex $ v $ in the corner-vertex graph. Note that in this graph some edges overlap.
A path in this graph is called a \emph{vertex path} or \emph{any-angle path}; a shortest vertex path between~$s$ and $t$ will be denoted by $\mathit{SVP_w}(s,t)$, where the subscript $w$ highlights that this path depends on a particular weight assignment $w$.

In $ G_{k\text{corner}} $, which is a subgraph of a corner-vertex graph, the vertex set is the set of corners of the tessellation, and each vertex is connected by an edge to a predefined set of $ k $ neighboring vertices, depending on the tessellation and other design decisions. See Figure~\ref{fig:3corner} for the $ 6 $-corner grid graph in a triangular tessellation. (Analogous $ k $-corner grid graphs can be defined for square and hexagonal tessellations.)
A path in this graph is called a \emph{grid path}; a shortest grid path between $s$ and $t$ will be denoted by $\mathit{SGP_w}(s,t)$.
	
	\begin{figure}[tb]
		\centering
		\begin{subfigure}[b]{0.44\textwidth}
            \includegraphics{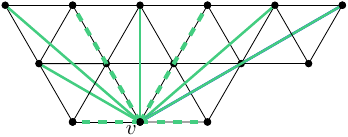}
     	    \caption{Some neighbors of a vertex $ v $ in $ G_{\text{corner}} $.}
     	    \label{fig:vertex-corner}
     	\end{subfigure}
		\qquad\qquad
     	\begin{subfigure}[b]{0.42\textwidth}
     	    \centering
        	\includegraphics{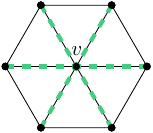}
         	\caption{All neighbors of a vertex $ v $ in $ G_{6\text{corner}} $.}
         	\label{fig:3corner}
     	\end{subfigure}
        \caption{Vertex $ v $ is connected to its neighbors in a triangular tessellation. The dashed lines represent the edges of the graphs that coincide with the edges of the cells.}
       	\label{fig:6neigh}
	\end{figure}
	
In all cases, the weight of each graph edge is defined by a function of the weights of the regions that the line segment associated with the edge traverses.
More formally, let $ T_i $ be a region in a subdivision with weight $ \omega_{i} \in \mathbb{R}_{\geq 0} $.
The cost of a segment $ \pi $ in the interior of a cell $ T_i $ is given by~$ \omega_{i}\rVert\pi\lVert $, where~$ \rVert \cdot \lVert $ is the Euclidean norm. In the case where~$ \pi $ lies on the boundary of two cells~$ T_j $ and~$ T_k $, the cost is~$ \min\{\omega_{j}, \omega_k\}\rVert\pi\lVert $. Thus, the weighted length of a path $ \Pi $ is the sum of the weighted lengths of its subpaths through each face and along each edge. With a slight abuse of notation, we still denote this by~$ \lVert \Pi \rVert$.

Figure~\ref{fig:comparison-corner} shows an example, illustrating the three paths considered in this work: the shortest path $ \mathit{SP_w}(s,t) $ (blue), the shortest vertex path $ \mathit{SVP_w}(s,t) $ (green), and the shortest grid path $ \mathit{SGP_w}(s,t) $ (red) in a $ 6 $-corner grid graph.
Note that, in the remainder of this work, any cell that is not depicted in the figures is considered to have infinite weight. In addition, if two paths coincide in a segment, one of them is depicted with dashed lines in that segment.

    \begin{figure}[tb]
    	\centering
        \includegraphics{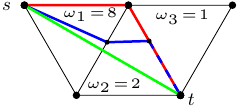}
    	\caption{$ \mathit{SP_w}(s,t) $ (blue), $ \mathit{SVP_w}(s,t) $ (green), and a $ \mathit{SGP_w}(s,t) $ (red) between two corners~$ s $ and $ t $ in $ G_{6\text{corner}} $. The cost of each path is~$ 16.75 $, $ 17.32 $ and $ 18 $, respectively, for a cell side length of $ 2 $.}
    	\label{fig:comparison-corner}
	\end{figure}
	
\subsection{Quality bounds for approximation paths}

The goal of this work is to understand the relation between $ \mathit{SGP_w}(s,t)$, $\mathit{SVP_w}(s,t) $, and $ \mathit{SP_w}(s,t) $, when a general weight assignment $\{1, \omega_i,\ldots,\omega_j,\infty \}$ is given in the WRP. Since $\mathit{SVP_w}(s,t) $ and $ \mathit{SGP_w}(s,t)$ are approximations of~$\mathit{SP_w}(s,t)$, a fundamental question is: what is the worst-case approximation factor that they can give?

In particular, we are interested in upper-bounding the ratios $ \frac{\lVert \mathit{SGP_w}(s,t)\rVert}{\lVert \mathit{SP_w}(s,t)\rVert} $ and $ \frac{\lVert \mathit{SVP_w}(s,t)\rVert}{\lVert \mathit{SP_w}(s,t)\rVert} $, since they indicate the approximation factor of the shortest grid path and shortest vertex path, respectively.
The ratio $ \frac{\lVert \mathit{SGP_w}(s,t)\rVert}{\lVert \mathit{SVP_w}(s,t)\rVert} $ is also studied, to see how different the two approximations can be.

The major contribution of this paper is the analysis of the quality of the three types of shortest paths for a weighted triangular grid for $ G_{6\text{corner}} $, which is the most natural graph defined on a triangular grid.

\subsection{Significance}

The WRP is very general, since it can be used to model many well-known variants of geometric shortest path problems.
Indeed, having all equal weights makes the metric equivalent to the Euclidean metric (up to scaling), while using weights $\{1,\infty\}$ allows to model paths amidst obstacles.

In the latter case, $\mathit{SGP_w}(s,t) $ and $\mathit{SP_w}(s,t) $ have been previously studied (note that $\mathit{SVP_w}(s,t) $ and $\mathit{SP_w}(s,t)$ coincide when the weights of the cells are taken in the set $\{1, \infty\}$). Algorithms using Snell's law of refraction, heuristic methods, or Dijkstra's algorithm are often used to find the shortest grid path~\cite{Nash,tran2020computing} between two given points. In case of large-scale grid environments some relaxed versions of Dijkstra and $A^*$ with linear running time $ O(n) $ ($ n $ is the size
of the grid) have been designed~\cite{Ammar}. Furthermore, some heuristics have been proposed for computing shortest paths in the context of game-programming \cite{Nash,yap2011any}, and for mobile robots \cite{carsten2009global,garcia2013dynamic,papadakis2013terrain}. Other algorithms have been suggested for isoline-based world representations \cite{gaw}, or for robots with two degrees of freedom \cite{Sharir}. In addition, some algorithms using heuristics, like Field D$^*$, have been generalized to 3D environments~\cite{carsten20063d} when computing $\mathit{SP_w}(s,t) $.

Almost all previous bounds on the ratio~$ \frac{\lVert \mathit{SGP_w}(s,t)\rVert}{\lVert \mathit{SP_w}(s,t)\rVert} $ consider a limited set of weights for the cells.
Bailey et al.~\cite{bailey2021path} considered only weights in the set $ \{1, \infty\} $ and proved that the weighted length of $ \mathit{SGP_w}(s,t) $ in hexagonal $ G_{6\text{corner}} $ and $ G_{12\text{corner}} $, square $ G_{4\text{corner}} $ and $ G_{8\text{corner}} $, and triangle $ G_{6\text{corner}} $ can be up to $ \approx\!1.15 $, $ \approx\!1.04, \ \approx\!1.41, \ \approx\!1.08 $, and $ \approx\!1.15 $ times the weighted length of $\mathit{SP_w}(s,t)$, respectively. In addition, for extended square grid neighborhoods such as $ G_{2^k\text{corner}} $ $r$-constrained it is proved that the length of an $ r$-constrained path is at most $ \frac{1}{\cos\left(\frac{\arccot(r)}{2}\right)} $ times the length of a shortest path~\cite{hew2017length}. For $ G_{2^k\text{corner}} $ and $ G_{2^k\text{center}} $ \cite{kramm2018suboptimality}, theoretical bounds for the ratio $ \frac{\lVert \mathit{SGP_w}(s,t)\rVert}{\lVert \mathit{SP_w}(s,t)\rVert} $ were presented, but no improvement was obtained over the results in~\cite{bailey2015path}.

Perhaps not surprisingly, the WRP turns out to be a challenging problem, so
the main challenge here is to obtain tight upper bounds that hold for \emph{any} assignment $\{1, \omega_i,\ldots,\omega_j,\infty \}$ of region weights. Efficient algorithms for the WRP only exist for a few special cases, e.g., rectilinear subdivisions with the~$L_1$ metric~\cite{ChenKT00}, or the \emph{maximum concealment problem}, where just regions with weights~$ 0 $ (travel in concealed free space), $1$ (travel in exposed free space), or $ \infty$ (travel through obstacles) are allowed~\cite{GewaliMMN90,mitchell1988algorithmic,Mitchell2}. This latter version of the WRP is related to stealth video games, such as ``Metal Gear'' \cite{MetalGear} or ``Assassin's Creed'' \cite{AssassinsCreed}, where the player uses stealth to avoid or overcome opponents, i.e., the objective is to minimize the time the moving agent is exposed to a given set of ``enemy'' observers.

The first algorithm for the WRP was a $(1+\varepsilon)$-approximation proposed by Mitchell and Papadimitriou~\cite{Mitchell2}, which runs in time $ O(n^8\log{\left(\frac{nNW}{w\varepsilon}\right)}) $, where~$ N $ is the maximum integer coordinate of any vertex of the subdivision, $ W $ and $ w $ are the maximum finite and the minimum nonzero integer weights assigned to the regions, respectively.
Substantial research has been devoted to designing faster approximation algorithms and studying different variants of the problem~\cite{Aleksandrov,Aleksandrov2,Aleksandrov3,rowe}.
Approximation schemes for the WRP are sophisticated methods that usually are based on variants of the continuous Dijkstra's algorithm, subdividing triangle edges in parts for which crossing shortest paths have the same combinatorial structure (e.g.,~\cite{Mitchell2}), or work by computing a discretization of the domain by carefully placing Steiner points (e.g., see~\cite{ChengJV15} for the currently best method of this type).
The lack of exact algorithms for the WRP is probably explained by the fact that it was recently shown to be impossible to solve this problem in the Algebraic Computation Model over the Rational Numbers~\cite{Lou}.
This is a model of computation where one can compute exactly any number that can be obtained from rational numbers by a finite number of basic operations.

\subsection{Results}

In this work, we consider tessellations where every face is an equilateral triangle (analogous ideas apply to square and hexagonal grids), and any arrangement of (non-negative) weights $\{1, \omega_i,\ldots,\omega_j,\infty \}$ to the cells of the discrete 2D environment. This extends recent results by Bailey et al.~\cite{bailey2021path}, who just considered weights in $\{1, \infty \}$.

Some advantages of triangular grids are that they can include hexagonal grids, and the distance between the vertices of adjacent cells is always the same, which simplifies distance calculations. In terms of computer games, movement of units in tight formation is allowed to have six directions and to turn smoothly. Furthermore, triangles can represent complex shapes, which is useful for building fortresses, bastions and streets, and interesting symmetrical shapes can be used for spells~\cite{ColossalCitadels}.

In contrast to previous work \cite{bailey2021path,hew2017length,kramm2018suboptimality}, we allow the weights $ \omega_i $ to take any value in $ \mathbb{R}_{\geq 0} $. When the weights of the cells are allowed to be arbitrary non-negative numbers, the only result that we are aware of is for square tessellations and another type of shortest path, with vertices at the center of the cells, for which Jaklin~\cite{Bound3} showed that $ \frac{\lVert \mathit{SGP_w}(s,t)\rVert}{\lVert \mathit{SP_w}(s,t)\rVert} \leq 2\sqrt{2} $. This latter model, considering vertices placed at the centers of the cells, simplifies collision avoidance during path execution, and has produced slightly different approximation results for the 2-dimensional terrain case, see~\cite{bailey2015path} for unweighted square tessellations. However, we do not study the ratios in this model since all of them are unbounded when we assign non-negative weights to the cells, see for instance Figure~\ref{fig:unbounded}.

\begin{figure}[tb]
		\centering
		\begin{subfigure}[t]{0.4\textwidth}
		\centering
            \includegraphics{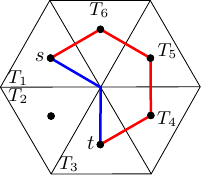}
     	    \caption{By setting the weight of $ T_2 $ to infinity, $ \mathit{SGP_w (s, t)} $ (red) must intersect some cells that $ \mathit{SP_w (s, t)} $ (blue) does not intersect.}
     	\end{subfigure}
		\qquad\qquad
     	\begin{subfigure}[t]{0.46\textwidth}
     	    \centering
        	\includegraphics{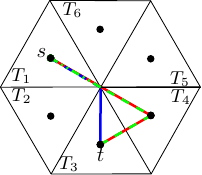}
         	\caption{Even if we increase the number of neighbors of each vertex, $ \mathit{SGP_w (s, t)} $ (red) and $ SVP_w(s,t) $ (green) intersect a cell that $ \mathit{SP_w (s, t)} $ (blue) does not intersect. Observe that the three paths coincide from $ s $ to the common vertex to $ T_1$ and $ T_4$.}
     	\end{subfigure}
        \caption{When the centers of the cells are used as the vertices of the associated graph, we can make the ratios $ \frac{\lVert \mathit{SGP_w}(s,t)\rVert}{\lVert \mathit{SP_w}(s,t)\rVert}, \frac{\lVert \mathit{SVP_w}(s,t)\rVert}{\lVert \mathit{SP_w}(s,t)\rVert} $ arbitrarily large by giving cells $ T_4, T_5, T_6 $ a finite weight much greater than 1, and cells $ T_1, T_3 $ weight $ 1 $.}
       	\label{fig:unbounded}
	\end{figure}

Our main result is that $ \frac{\lVert \mathit{SGP_w}(s,t)\rVert}{\lVert \mathit{SP_w}(s,t)\rVert} = \frac{2}{\sqrt{3}} $ in the worst case, for any (non-negative) weight assignment.
This implies bounds for the other two ratios considered.
Moreover, our upper bound for $ \frac{\lVert \mathit{SGP_w}(s,t) \rVert}{\lVert \mathit{SP_w}(s,t)\rVert}$ is tight, since it matches the lower bound of Nash~\cite{Nash}.
Table~\ref{tab:hexw} summarizes our results, including Nash's lower bounds.

\begin{table}[tb]
\begin{center}
\resizebox{\textwidth}{!}{\begin{tabular}{|c|c|c|}
\hline
\xrowht{5pt}             & Lower bound                      & Upper bound \\ \hline
\xrowht{17pt} $ \frac{\lVert \mathit{SGP_w}(s,t)\rVert}{\lVert \mathit{SP_w}(s,t)\rVert} $  & $ \frac{2}{\sqrt{3}} \approx 1.15 $ \cite{Nash} & $ \frac{2}{\sqrt{3}} \approx 1.15 $ (Thm.~\ref{thm:6})            \\ \hline
\xrowht{26pt} $ \frac{\lVert \mathit{SVP_w}(s,t)\rVert}{\lVert \mathit{SP_w}(s,t)\rVert} $  & $ \frac{2\sqrt{7\sqrt{3}-12}}{(7-4\sqrt{3})(6\sqrt{2}+\sqrt{7\sqrt{3}-12})} \approx 1.11 $  (Obs.~\ref{obs:5})   &  $ \frac{2}{\sqrt{3}} \approx 1.15 $  (Cor.~\ref{cor:5})      \\ \hline
\xrowht{17pt} $ \frac{\lVert \mathit{SGP_w}(s,t)\rVert}{\lVert \mathit{SVP_w}(s,t)\rVert} $ &   $ \frac{2}{\sqrt{3}} \approx 1.15 $    \cite{Nash} &   $ \frac{2}{\sqrt{3}} \approx 1.15 $  (Cor.~\ref{cor:6})   \\
\hline

\end{tabular}}
\end{center}
\caption{Bounds on the quality of approximations of shortest paths in weighted triangular tessellations for~$ G_{6\text{corner}} $.}
		\label{tab:hexw}
\end{table}

In order to obtain bounds on the ratios, we uncover some properties of the different paths that allow us to prove our approximation ratios. These properties are related to the behavior of shortest paths and to the geometry of a constant number of cells of the tessellation.
Surprisingly, a consequence of our analysis is that the worst-case ratios are upper-bounded by constants that are independent of the weights assigned to the regions in the tessellation, i.e., the assignment of arbitrary weights to the cells is not the determining factor on the worst-case ratio.

\section{$ \frac{\lVert \mathit{SGP_w}(s,t)\rVert}{\lVert \mathit{SP_w}(s,t)\rVert} $ ratio in $ G_{6\text{corner}} $ for triangular cells}

This section is devoted to obtaining, for two vertices $ s $ and~$ t $, an upper bound on the ratio~$ \frac{\lVert \mathit{SGP_w}(s,t)\rVert}{\lVert \mathit{SP_w}(s,t)\rVert} $ in~$ G_{6\text{corner}} $ in a triangular tessellation~$\mathcal{T}$ where faces are assigned arbitrary weights in $ \mathbb{R}_{\geq 0} $. We assume that $ \mathit{SP_w}(s,t) $ is unique, otherwise it is enough to repeat the following argument where we compute an upper bound for $ \frac{\lVert \mathit{SGP_w}(s,t)\rVert}{\lVert \mathit{SP_w}(s,t)\rVert} $. In addition, we suppose, without loss of generality, that the length of each edge of the triangular cells is~$ 2 $, in order to have a non-fractional length ($ \sqrt{3} $) for the cell height.

Let $ (s = u_1, u_2, \ldots, u_\ell = t) $ be the ordered sequence of consecutive points where a grid path $\mathit{GP_w}(s,t) $ and the shortest path $ \mathit{SP_w}(s,t) $ coincide; in the case where $\mathit{GP_w}(s, t) $ and~$ \mathit{SP_w}(s, t) $ share one or more segments, we define the corresponding points as the endpoints of each of these segments, see Figure~\ref{fig:7} for an illustration. Observation~\ref{thm:1} below is a special case of the mediant inequality.

	\begin{observation}
		\label{thm:1}	
		Let $\mathit{GP_w}(s, t) $ and $ \mathit{SP_w}(s,t) $ be, respectively, a weighted grid path, and a weighted shortest path, from $ s $ to $ t $. Let $ u_i $ and $ u_{i+1} $ be two consecutive points where $\mathit{GP_w}(s,t) $ and $ \mathit{SP_w}(s,t) $ coincide. Then, the ratio $ \frac{\lVert\mathit{GP_w}(s,t)\rVert}{\lVert \mathit{SP_w}(s,t)\rVert} $ is at most the maximum of all ratios $ \frac{\lVert\mathit{GP_w}(u_i,u_{i+1})\rVert}{\lVert \mathit{SP_w}(u_i,u_{i+1})\rVert}, i \in \{1, \ldots, \ell-1\} $.
	\end{observation}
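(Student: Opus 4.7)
The plan is to reduce this to the mediant inequality (which is quoted as the statement's justification) applied piecewise to the decompositions of the two paths at the shared points $u_1, u_2, \ldots, u_n$.

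First, I would make the decomposition explicit. Let $\gamma_i$ denote the portion of $\mathit{GP_w}(s,t)$ between the consecutive shared points $u_i$ and $u_{i+1}$, and $\sigma_i$ the portion of $\mathit{SP_w}(s,t)$ between the same two points. Since $u_1 = s, \dots, u_n = t$ is an ordered sequence along both paths, these subpaths concatenate to recover the whole paths, giving
\begin{equation*}
\lVert \mathit{GP_w}(s,t)\rVert = \sum_{i=1}^{n-1}\lVert \gamma_i\rVert, \qquad \lVert \mathit{SP_w}(s,t)\rVert = \sum_{i=1}^{n-1}\lVert \sigma_i\rVert.
\end{equation*}
Under the statement's convention, $\lVert\mathit{GP_w}(u_i,u_{i+1})\rVert = \lVert\gamma_i\rVert$ and $\lVert\mathit{SP_w}(u_i,u_{i+1})\rVert = \lVert\sigma_i\rVert$; the latter identification is standard because any subpath of a weighted shortest path between its endpoints is itself a shortest path between those endpoints (otherwise we could splice in a cheaper piece and contradict optimality of $\mathit{SP_w}(s,t)$).

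Next I would invoke the mediant inequality: for any positive reals $a_1, \ldots, a_{n-1}$ and $b_1, \ldots, b_{n-1}$,
\begin{equation*}
\frac{\sum_i a_i}{\sum_i b_i} \;\leq\; \max_{1 \le i \le n-1}\frac{a_i}{b_i}.
\end{equation*}
Setting $a_i = \lVert\gamma_i\rVert$ and $b_i = \lVert\sigma_i\rVert$ and combining with the decomposition above yields
\begin{equation*}
\frac{\lVert \mathit{GP_w}(s,t)\rVert}{\lVert \mathit{SP_w}(s,t)\rVert} \;\leq\; \max_{1 \le i \le n-1}\frac{\lVert \mathit{GP_w}(u_i,u_{i+1})\rVert}{\lVert \mathit{SP_w}(u_i,u_{i+1})\rVert},
\end{equation*}
which is exactly the claim.

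There is no real obstacle here; the only point that needs care is the bookkeeping at points where $\mathit{GP_w}(s,t)$ and $\mathit{SP_w}(s,t)$ share an entire segment rather than meeting at a single point. The statement already addresses this by defining $u_i$ and $u_{i+1}$ as the two endpoints of each such shared segment, which ensures that on each subinterval $[u_i, u_{i+1}]$ either the two paths coincide (making the ratio $1$, trivially bounded by the maximum) or they are disjoint except at the endpoints, so the decomposition above is well-defined and the mediant inequality applies.
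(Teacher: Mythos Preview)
Your proof is correct and matches the paper's own justification: the paper does not spell out a proof but simply remarks that the observation is a special case of the mediant inequality, which is exactly the argument you give.
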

	
\subsection{Crossing paths and weakly simple polygons}

The shape of the shortest paths~$ \mathit{SP_w}(s,t) $ and $ \mathit{SGP_w}(s,t) $ is unknown to us. Moreover, knowing the exact shape of $ \mathit{SP_w}(s,t) $ is difficult. In addition, the two paths might intersect many different cells of the tessellation, so we need to take into account the weights of all the intersected regions. To address all these issues, for a given $\mathit{SP_w}(s,t)$ we define a particular grid path called a \textit{crossing path}~$ X(s,t) $, whose structure is simpler than the structure of a $ \mathit{SGP_w}(s,t) $, and whose weighted length provides an upper bound on the weighted length of $ \mathit{SGP_w}(s,t) $. See the orange path in Figure~\ref{fig:7}. Thus, since $ \lVert \mathit{SGP_w}(s,t) \rVert \leq \lVert X(s,t) \rVert $, the key idea to prove an upper bound on the ratio $ \frac{\lVert \mathit{SGP_w}(s,t)\rVert}{\lVert \mathit{SP_w}(s,t)\rVert} $ is to upper-bound the ratio $ \frac{\lVert \mathit{X}(s,t)\rVert}{\lVert \mathit{SP_w}(s,t)\rVert} $. To do so, we analyze the components resulting from the intersection between $\mathit{SP_w}(s,t)$ and $ X(s,t)$. Each component is a weakly simple polygon, whose boundary consists of the portions of $\mathit{SP_w}(s,t)$ and $ X(s,t)$ between the intersection points. The reason the polygons are weakly simple is because some portion of the boundary may be shared between the two paths. These weakly simple polygons are the basic unit that we will analyze to obtain our main result. We also obtain a relation between the weights of some cells intersected by~$ \mathit{SP_w}(s,t) $ and $ X(s,t)$. Notice that, for one type of weakly simple polygon, we will need a finer analysis using \emph{shortcut paths} $ \Pi_i(s,t) $, which are defined in Section~\ref{sec:weakly} (see Definition~\ref{def:22}).

	\begin{figure}[tb]
		\centering
		\includegraphics[width=\textwidth]{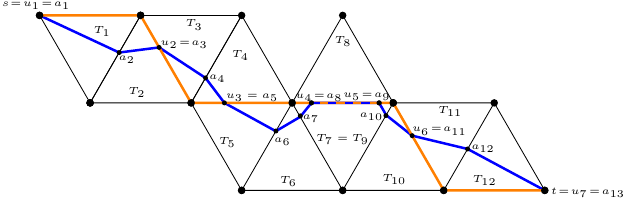}
		\caption{Weighted shortest path $ \mathit{SP_w}(s,t) $ (blue) and the crossing path $ X(s,t) $ (orange) from~$ s $ to $ t $ in a triangular tessellation.}
		\label{fig:7}
	\end{figure}
	
We begin by defining $ X(s,t) $. Let $ (T_1, \ldots, T_n) $ be the ordered sequence of consecutive triangular cells intersected by~$ \mathit{SP_w}(s, t) $ in the tessellation $ \mathcal{T} $. Let~$ v^i_1, v^i_2, v^i_3 $ be the three consecutive corners on the boundary of~$ T_i, \ 1 \leq i \leq n $, in clockwise order. Let $ (s = a_1, a_2, \ldots, a_{n+1} = t) $ be the sequence of consecutive points where~$ \mathit{SP_w}(s, t) $ changes the cell(s) it belongs to in $ \mathcal{T} $. In particular, let $ a_i $ and~$ a_{i+1} $ be, respectively, the points where $ \mathit{SP_w}(s,t) $ enters and leaves $ T_i $. Informally, we define $ X(s,t) $ based on $ \mathit{SP_w}(s,t) $. Essentially, $ \mathit{SP_w}(s,t) $ must traverse a number of cells of the tessellation. Although we do not know which cells are crossed, based on the geometry of the cell, we can compute an approximate path, $ X(s,t)$. We formalize this notion below:
	
	\begin{definition}
		\label{def:1}
		The \emph{crossing path} $ X(s,t) $ between two vertices $ s $ and $ t $ is defined by the sequence $ (X_1, \ldots, X_n) $, where $ X_i $ is a sequence of vertices determined by the pair~$ (a_i, a_{i+1}), \ 1 \leq i \leq n $, as follows:
		\begin{enumerate}
		    \setlength\itemsep{0em}
			\item If $ a_{i} $ and $ a_{i+1} $ are on the same edge $ e^i_1 \in T_i $, let $ v $ and $ w $ be the endpoints of $e^i_1$,
			where $a_i$ is encountered before $a_{i+1}$ when traversing $ e^i_1$ from $v$ to $w$.
			Then $ X_i =(v,w)$ if $ a_i = v$, and $ X_i=(w) $ otherwise, see Figure~\ref{fig:13}.
			\item If $ a_i $ is a corner of $ T_i $, and $ a_{i+1} $ belongs to the interior of the edge $ e^i_2 \in T_i $ not adjacent to $ a_i $, let $ p $ be the midpoint of $ e^i_2 $. If $ a_{i+1} $ is to the left of $ \overrightarrow{a_ip} $, $ X_i $ is $ a_i $ and the endpoint of $ e^i_2 $ to the left of $ \overrightarrow{a_ip} $, see Figure~\ref{fig:32}. Otherwise, $ X_i $ is $ a_i $ and the endpoint of $ e^i_2 $ to the right of $ \overrightarrow{a_ip} $.
			\item If $ a_{i} $ belongs to the interior of an edge $ e^i_1 \in T_i $ and $ a_{i+1} $ is a corner of $ T_i $ not contained in $ e^i_1 $, $ X_i = (a_{i+1}) $, see Figure~\ref{fig:44}.
			\item If $ a_i $ and $ a_{i+1} $ belong to the interior of two different edges $ e^i_1, e^i_2 \in T_i $, $ X_i $ is the common endpoint of $ e^i_1 $ and $ e^i_2 $, see Figure~\ref{fig:55}.
		\end{enumerate}
	\end{definition}
	
	\begin{figure}[tb]
	\captionsetup[sub]{justification=centering}
		        \centering
		            \begin{subfigure}[b]{0.2\textwidth}
			            \centering
	    	            \includegraphics{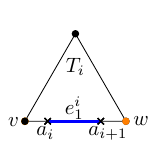}
	    	            \caption{}
	    	            \label{fig:13}
	                \end{subfigure}
    	           \quad
		            \begin{subfigure}[b]{0.2\textwidth}
			            \centering
	    	            \includegraphics{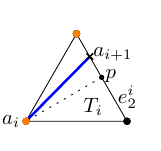}
	    	            \caption{}
	    	            \label{fig:32}
	                \end{subfigure}
    	            \quad
		            \begin{subfigure}[b]{0.2\textwidth}
   			            \centering
	    	            \includegraphics{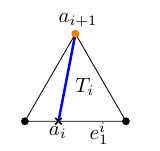}
	    	            \caption{}
	    	            \label{fig:44}
    	            \end{subfigure}
    	            \quad
		            \begin{subfigure}[b]{0.2\textwidth}
   			            \centering
	    	            \includegraphics{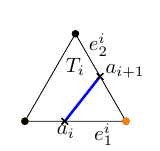}
	    	            \caption{}
	    	            \label{fig:55}
    	            \end{subfigure}
    	            \caption{Some of the positions of the intersection points between $ \mathit{SP_w}(s,t) $ (blue) and a cell. The vertices of the crossing path $ X(s,t) $ in a triangular cell are depicted in orange.}
            	\end{figure}
	
	
Note that the four conditions in Definition \ref{def:1} cover all possible cases. Thus, each shortest path $ \mathit{SP_w}(s, t) $ correspond to a single unique $ X(s, t) $.

Let $ (s\!=\!u_1, u_{2}, \ldots, u_\ell\!=\!t) $ be the sequence of consecutive points where~$ X(s, t) $ and~$ \mathit{SP_w}(s, t) $ coincide. The union of $ \mathit{SP_w}(s,t) $ and $ X(s,t) $ between two consecutive points $ u_j $ and $ u_{j+1} $, for $ 1 \leq j < \ell $, induces a weakly simple polygon (see~\cite{Chang} for a formal definition).
Henceforth, we use the term polygon to mean weakly simple polygon. We distinguish six different types of weakly simple polygons, denoted $P_1, \dots, P_6$, depending on the number of edges of $ \mathcal{T} $  intersected by~$ \mathit{SP_w}(u_j,u_{j+1}) $, see Figure~\ref{fig:37}.


    \begin{definition}
		\label{def:12}
		Let $ u_j $ and $ u_{j+1} $ be two consecutive points where $ X(s, t) $ and $ \mathit{SP_w}(s, t) $ coincide in a triangular tessellation. Let $ p $ be a corner of $ \mathcal{T} $ contained in~$ X(u_j, u_{j+1}) $. A weakly simple polygon induced by $ u_j $ and $ u_{j+1} $ is of \emph{type~$ P_k $}, for $ 1 \leq k \leq 6 $, if the subpath $ \mathit{SP_w}(u_j, u_{j+1}) $ intersects $ k $ consecutive edges around~$ p $.
	\end{definition}
	
	\begin{figure}[tb]
		\centering
		\includegraphics{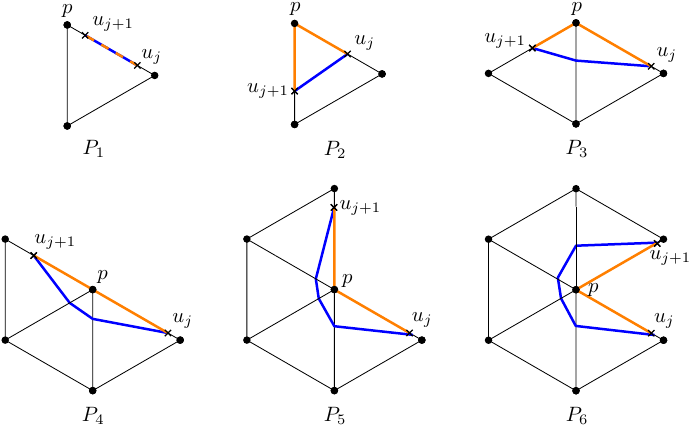}
		\caption{An example of weakly simple polygon for each type $ P_k $ (notice that, w.l.o.g., $ u_j $ and $ u_{j+1} $ could have other positions), and the subpath of the crossing path $ X(s,t) $ (orange) from~$ u_j $ to $ u_{j+1} $ intersecting consecutive triangular cells. $ \mathit{SP_w}(u_j, u_{j+1}) $ is depicted in blue.}
		\label{fig:37}
	\end{figure}
	
	The weakly simple polygons are an important tool in our proof, since it is enough to upper-bound $ \frac{\lVert X(s,t)\rVert}{\lVert \mathit{SP_w}(s,t)\rVert} $ for polygons of each type~$P_1, \dots, P_6$.
	
	\begin{lemma}
        \label{obs:uniqueness}
        The polygons from Definition~\ref{def:12} are the only weakly simple polygons that can arise.
    \end{lemma}

    \begin{proof}
        Suppose the point $ u_j$ is a vertex, then if the next point where~$ \mathit{SP_w}(s, t) $ changes cell belongs to the same edge as $ u_j $ then we have a polygon of type~$ P_1$, by Definition~\ref{def:1}, part~1. Otherwise, we obtain a polygon of type~$ P_k$, by Definition~\ref{def:1}, parts~2 and 4.

        Now, suppose $ u_j $ is a point on the interior of an edge. Let $ b $ be the next point where $ \mathit{SP_w}(s, t) $ changes the cell(s) it belongs to. Then if $ b $ belongs to the same edge as $ u_j $, we obtain a polygon of type~$ P_1$, by Definition~\ref{def:1}, part~1. Otherwise, there is a set of points where $ \mathit{SP_w}(s, t) $ coincides with $ k $ consecutive edges around a point~$ p $, which is the common endpoint of the edges containing $ u_{j} $ and $ b $. Hence, by Definition~\ref{def:1}, parts~3 and~4, $ X(s,t) $ intersects $ u_j, p $, and $ u_{j+1} $, which is the last point of the set. Thereby, defining polygons of type~$ P_2, \ldots, P_6$ depending on the value of~$ k$.
    \end{proof}
	
	\subsection{Bounding the ratio for weakly simple polygons}\label{sec:weakly}
	We are now ready to upper-bound the ratio $ \frac{\lVert X(u_j, u_{j+1})\rVert}{\lVert \mathit{SP_w}(u_j, u_{j+1}) \rVert} $ for each of the six types of weakly simple polygons in~$ G_{6\text{corner}} $.
	
	First, we make a geometric observation that will be needed later.
    Let $ p $ and~$ q $ be two points that are in the interior of two different edges on the boundary of the same triangular cell. Then, the length of the segment between $ p $ and~$ q $ is given in Observation~\ref{obs:3}, which can be proved using the law of cosines.

    \begin{observation}
		\label{obs:3}
		Let $ T_i $ be a triangular cell, and let $ (u, v, w) $ be the three vertices of $ T_i $, in clockwise order. Let $ p \in [u, v] $ and $ q \in [v, w] $ be two points on the boundary of $ T_i $, see Figure~\ref{fig:distance}. Then, $ |pq| = \sqrt{|pv|^2+|vq|^2-|pv||vq|} $.
	\end{observation}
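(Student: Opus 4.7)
The plan is a direct one-line application of the law of cosines, using the fact (established at the start of Section~2) that every cell of the tessellation is an equilateral triangle, so all interior angles equal $\pi/3$.

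First I would observe that since $p \in [u,v]$ and $q \in [v,w]$, the triangle $pvq$ shares the angle at $v$ with the equilateral cell $T_i$. Hence the angle $\angle pvq$ equals $\pi/3$. Then I would write the law of cosines in triangle $pvq$:
\begin{equation*}
|pq|^2 = |pv|^2 + |vq|^2 - 2|pv|\,|vq|\cos(\angle pvq) = |pv|^2 + |vq|^2 - 2|pv|\,|vq|\cdot\tfrac{1}{2},
\end{equation*}
which simplifies to $|pq|^2 = |pv|^2 + |vq|^2 - |pv|\,|vq|$. Taking the positive square root (both $|pq|$ and the right-hand side are non-negative, the latter because $a^2+b^2-ab = (a-b/2)^2 + 3b^2/4 \ge 0$) gives the claimed identity.

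There is essentially no obstacle: the only thing to notice is that the claim does not require $p$ or $q$ to be distinct from the corners $u,v,w$ (if $p = v$ or $q = v$, the formula degenerates correctly to $|pq| = |vq|$ or $|pq| = |pv|$), and that the clockwise orientation assumed for $(u,v,w)$ plays no role in the length computation. Since the statement speaks of the subpath of the weighted shortest path between $p$ and $q$ inside a single cell, I would also remark (for completeness, though it is standard for weighted region paths) that within a single homogeneous cell the weighted shortest path coincides with the straight segment $[p,q]$, so $|pq|$ is indeed the relevant length.
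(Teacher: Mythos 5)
Your proof is correct and follows exactly the route the paper indicates: the paper states only that Observation~\ref{obs:3} ``can be proved using the law of cosines,'' and your argument is precisely that application, using the $\pi/3$ angle of the equilateral cell at $v$. The extra remarks about degenerate positions and the straight-segment nature of the shortest path within a single cell are harmless and consistent with the paper's setup.
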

	
	\begin{figure}[tb]
	    \centering
	    \includegraphics{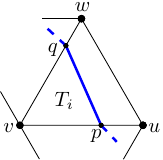}
	    \caption{Subpath of a weighted shortest path $ \mathit{SP_w}(s,t) $ between two points $ p $ and $ q $.}
	    \label{fig:distance}
    \end{figure}

    We observe that, by definition, we have $ \frac{\lVert X(u_j,u_{j+1})\rVert}{\lVert \mathit{SP_w}(u_j,u_{j+1})\rVert} = 1 $ for a polygon of type~$P_1$. Therefore, our focus will be on bounding polygons of type~$P_2, \dots, P_6$.
	We begin from the simpler case of polygons of type~$ P_3, \dots, P_6 $, and later we will consider a polygon of type~$P_2$, whose analysis is substantially more involved.
	
	\begin{lemma}
		\label{lem:27}
		Let $ u_j, \ u_{j+1} \in \mathcal{T} $ be two consecutive points where a shortest path~$ \mathit{SP_w}(s,t) $ and the crossing path $ X(s,t) $ coincide. If $ u_j, u_{j+1} $ induce a weakly simple polygon of type~$ P_k $, for $ 3 \leq k \leq 6 $, then $ \frac{\lVert X(u_j, u_{j+1})\rVert}{\lVert \mathit{SP_w}(u_j, u_{j+1}) \rVert} \leq \frac{2}{\sqrt{3}} $.
	\end{lemma}
	
	\begin{proof}
	    Let $ T_{i-1} \cap T_i $ be the cell boundary containing $ u_j $, and let $ T_{i+k-2} \cap T_{i+k-1} $ be the cell boundary containing $ u_{j+1} $, see Figure~\ref{fig:66} for an example with $ k = 3 $. Since $ u_j $ and $ u_{j+1} $ are two consecutive points where $ \mathit{SP_w}(s,t) $ and $ X(s,t) $ coincide, and they induce a polygon of type~$ P_k $, $ \mathit{SP_w}(s,t) $ enters $ T_i $ from $ T_{i-1} $ through $ u_j $, and $ \mathit{SP_w}(s,t) $ leaves $ T_{i+k-2} $, and enters $ T_{i+k-1} $ through $ u_{j+1} $. Let $ (v^i_1, p, v^i_2) $ be the sequence of consecutive vertices on the boundary of $ T_i $, in clockwise order, and $ (v^{i+k-2}_1, p, v^{i+k-2}_2) $ be the sequence of consecutive vertices on the boundary of $ T_{i+k-2} $, in clockwise order. Let $ x \in [v^i_2, p] $ be the point where $ \mathit{SP_w}(s,t) $ leaves $ T_i $, and let $ y \in [v^{i+k-2}_1, p] $ be the point where $ \mathit{SP_w}(s,t) $ enters $ T_{i+k-2} $. Let $ a, b, c, d $ be the lengths $ |u_jp|, |u_jx|, |u_{j+1}p| $, and $ |u_{j+1}y| $, respectively.
	
	    \begin{figure}[tb]
			\centering
	    	\includegraphics{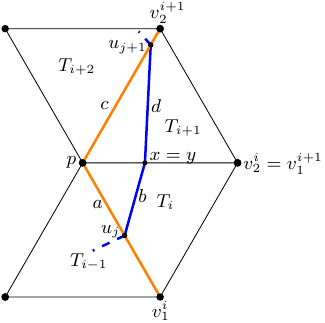}
    	    \caption{Subpaths of the crossing path $ X(s,t) $ (orange), and $ \mathit{SP_w}(s,t) $ (blue) traversing a polygon of type~$ P_3 $ in a triangular tessellation.}
    	    \label{fig:66}
        \end{figure}
	
	    An upper bound on the ratio $ \frac{\lVert X(u_j, u_{j+1})\rVert}{\lVert \mathit{SP_w}(u_j, u_{j+1})\rVert} $ in a polygon of type~$ P_k $ is given by
		\begin{align*}
			\frac{\lVert X(u_j, u_{j+1})\rVert}{\lVert \mathit{SP_w}(u_j, u_{j+1})\rVert} & = \frac{a\min\{\omega_{i-1}, \omega_i\} +c\min\{\omega_{i+k-2}, \omega_{i+k-1}\}}{b\omega_i+\lVert \mathit{SP_w}(x,y) \rVert + d\omega_{i+k-2}} \leq \frac{a\omega_i+ c\omega_{i+k-2}}{b\omega_i+d\omega_{i+k-2}} \leq \\
			& \leq \frac{a\omega_i+ c\omega_{i+k-2}}{\frac{\sqrt{3}}{2}a\omega_i+\frac{\sqrt{3}}{2}c\omega_{i+k-2}} = \frac{a\omega_i+ c\omega_{i+k-2}}{\frac{\sqrt{3}}{2}(a\omega_i+c\omega_{i+k-2})} = \frac{2}{\sqrt{3}}.
    	\end{align*}
	Note that the $ \lVert \mathit{SP_w}(x,y) \rVert $ term in the second denominator is zero for the case of $ P_3 $, while for $ P_k, k \geq 4 $, will just make the fraction smaller. Also, the last inequality comes from the fact that $ [p,u_j] $ is the side of a triangle adjacent to an angle of $ \frac{\pi}{3}$, and $ [u_j,x] $ is the side opposite to this angle. Hence, if we want to minimize the length of $ [u_j,x] $, it has to be perpendicular to $ [p,u_j] $, so $ \lvert u_jx \rvert \geq \lvert pu_j \rvert \sin{\frac{\pi}{3}} = \lvert pu_j \rvert \frac{\sqrt{3}}{2} $. An analogous reasoning can be applied in triangle $ T_{i+1} $ for distances $ c $ and $ d $.
	\end{proof}
	
	Next, we present a similar bound for a polygon of type~$ P_2 $. There is the added difficulty that for a polygon of type~$ P_2 $ it is possible to find an instance where $ \mathit{SP_w}(s,t) $ intersects a weakly simple polygon of type~$ P_2 $ such that the ratio $ \frac{\lVert X(s, t)\rVert}{\lVert \mathit{SP_w}(s, t) \rVert} $ is much larger than $ \frac{\lVert \mathit{SGP_w}(s, t)\rVert}{\lVert \mathit{SP_w}(s, t) \rVert} $, see Figure~\ref{fig:63}. However, between $ s $ and $ t $ there are other grid paths shorter than $ X(s,t) $ that intersect a polygon of type~$ P_2 $. In order to obtain an upper bound when $ \mathit{SP_w}(s,t) $ intersects a polygon of type~$P_2$, we need a finer analysis.
	
	\begin{figure}[tb]
			\centering
	    	\includegraphics{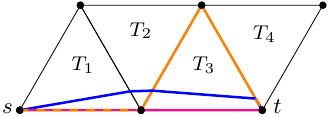}
	    	\caption{The weights of the cells are $ \omega_1\!=\!2, \ \omega_2\!=\!1.5, \ \omega_3\!=\!2 $, and $ \omega_4\!=\!1.2 $. The ratio $ \frac{\lVert X(s,t) \rVert}{\lVert \mathit{SP_w}(s,t)\rVert} $ is $ \approx\!1.19 $, whereas the ratio $ \frac{ \lVert \Pi_3(s,t) \rVert}{\lVert \mathit{SP_w}(s,t)\rVert} $ is almost $ 1 $.}
	    	\label{fig:63}
        \end{figure}
	
	In Definition~\ref{def:22} we define another class of grid paths, called \emph{shortcut paths}, that gives a tighter upper bound on the ratio $ \frac{\lVert \mathit{SGP_w}(s, t)\rVert}{\lVert \mathit{SP_w}(s, t) \rVert} $ when a weakly simple polygon of type~$ P_2$ is intersected by $ \mathit{SP_w}(s,t) $. See the purple path in Figure~\ref{fig:62}.
	
	\begin{figure}[tb]
	    \centering
	    \includegraphics{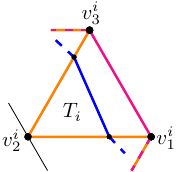}
	    \caption{Weighted shortest path $ \mathit{SP_w}(s,t) $ (blue), crossing path $ \mathit{X}(s,t) $ (orange), and shortcut path $ \Pi_i(s,t) $ (purple) intersecting a weakly simple polygon of type~$ P_2 $.}
	    \label{fig:62}
    \end{figure}
		
		
		\begin{definition}
			\label{def:22}
			Let $ \mathit{SP_w}(s,t) $ enter and leave cell $ T_i \in \mathcal{T} $ through the edges $ [v_1^i, v_2^i] $ and $ [v_2^i, v_3^i] $, respectively. If $ X(s,t) $ contains the subpath $ (v_1^i, v_2^i,v_3^i) $, the \emph{shortcut path} $ \Pi_i(s,t) $ is defined as the grid path $ X(s, v_1^i) \cup (v_1^i,v_3^i) \cup X(v_3^i, t) $.
		\end{definition}
		
	By using the shortcut path $ \Pi_i(s,t) $, we obtain a relation between the weights of the cells adjacent to $ T_i \in \mathcal{T} $ intersected by the crossing path $ X(s,t) $. Before obtaining an upper bound on the ratio $ \frac{\lVert X(s,t)\rVert}{\lVert \mathit{SP_w}(s,t)\rVert} $ for a polygon of type~$ P_2 $, we define a \emph{$ P_2 $-triple} of cells, see Figure~\ref{fig:29}, that will be useful later.
	
	\begin{definition}
	    \label{def:triple}
	    A \emph{$ P_2$-triple} between two vertices $ s $ and $ t $ is defined as a set of five consecutive cells $ T_1, \ldots, T_5 $ with the following properties:
	    \begin{itemize}
	    \setlength\itemsep{0em}
	        \item The cells form a strip of width $ \sqrt{3} $.
	        \item $ s $ is the vertex common to $ T_1 $ and $ T_2 $ not adjacent to $ T_3 $.
	        \item $ t $ is the vertex common to $ T_4 $ and $ T_5 $ not adjacent to $ T_3 $.
	        \item $ \mathit{SP_w}(s,t) $ determines three weakly simple polygons of type~$ P_2$, one around each of the vertices of~$T_3$.
	    \end{itemize}
	\end{definition}
	
	\begin{figure}[tb]
	    \centering
	    \includegraphics{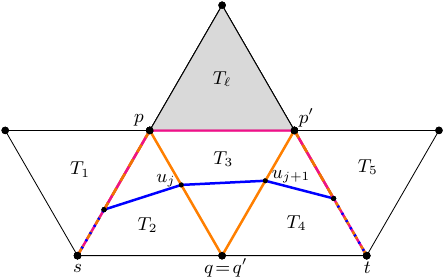}
	    \caption{$ P_2 $-triple between $ s $ and $ t $ is depicted in white. $ \mathit{SP_{w}}(s,t), \ \Pi_3(s,t) $ and $ X(s,t) $ are depicted in blue, purple and orange, respectively. Observe that the three paths coincide when $ \mathit{SP_{w}}(s,t)$ coincide with the edges of the cells.}
	    \label{fig:29}
	\end{figure}
	
	By using $ P_2$-triples we reduce the number of cases to analyze. Using Observation~\ref{thm:1}, Lemma~\ref{lem:27}, and the definition of polygons~$P_1$, we know that the only type of weakly simple polygon that could give a ratio larger than $ \frac{2}{\sqrt{3}} $ is a $ P_2 $.
	Thus, we can assume that the shortest paths giving the maximum ratio intersect a weakly simple polygon $ P_2 $, and that this ratio is larger than $ \frac{2}{\sqrt{3}} $ (Otherwise, $ \frac{\lVert X(s,t)\rVert}{\lVert \mathit{SP_w}(s,t)\rVert} \leq \frac{2}{\sqrt{3}} $, and we are done.).
	
	In the following, we replace a given instance by a $ P_2 $-triple, corresponding to a weakly simple polygon of type $P_2$ that has the same ratio as the given instance.
	Thus, instead of upper-bounding the ratio of the given instance, we can do it for the $P_2$-triple, which is substantially easier.

    \begin{lemma}
        \label{lem:triple}
        For any weakly simple polygon $ P' $ of type~$ P_2 $, a $ P_2$-triple can be defined such that $ P' $ is the polygon of type~$ P_2 $ of the vertex $ T_2 \cap T_3 \cap T_4 $, and the weights $ \omega_2, \omega_3, \omega_4 $ remain as in $ P' $. We say that this $ P_2$-triple \emph{corresponds} to the weakly simple polygon $ P'$.
    \end{lemma}

    \begin{proof}
        Let $ u_j $ and $ u_{j+1} $ be the points where $ \mathit{SP_w}(s,t) $ respectively enters and leaves~$ P' $, and let $ u_j, u_{j+1} \in T_{j'} $.
        Consider a $ P_2$-triple intersected by a path $ \mathit{SP_w}(s',t') $ where the weights of the cells $ T_2, \ T_3 $, and $ T_4 $ are, respectively, the same as the weights of cells $ T_{j'-1} $, $ T_{j'} $ and $ T_{j'+1} $ in the former instance. The weights of the cells $ T_1 $ and $ T_5 $ are obtained by solving the system of equations given by Snell's law of refraction. By construction, this $ P_2 $-triple is a valid instance of $ \mathit{SP_w}(s',t') $, and it intersects the cell $ T_3 $ forming the same weakly simple polygon~$ P' $ as in the former instance.
    \end{proof}
	
	Lemma~\ref{lem:triple} allows us to assume, from now on, that an upper bound on the ratio $ \frac{\lVert X(s,t)\rVert}{\lVert \mathit{SP_w}(s,t)\rVert} $ is given by the length of a path that intersects a $ P_2 $-triple as in Definition~\ref{def:triple}.
		
		\begin{lemma}
			\label{lem:8}
			Let $ \frac{\lVert\mathit{SGP_{w}}(s,t)\rVert}{\lVert \mathit{SP_{w}}(s,t)\rVert} $ be the maximum ratio attained by any path intersecting a polygon of type~$ P_2$. Consider its corresponding $ P_2$-triple. Let $ T_3 $ be the cell where the maximum ratio $ \frac{\lVert X(u_j, u_{j+1})\rVert}{\lVert \mathit{SP_w}(u_j, u_{j+1}) \rVert}, \ u_j, u_{j+1} \in T_3 $ is attained. Then, $ \lVert X(s, t) \rVert = \lVert \Pi_3(s,t) \rVert $.
		\end{lemma}
		
		\begin{proof}
		    We prove the result by contradiction, arguing that if there is at least one grid path $\mathit{GP_w}(s, t)$ among $ \{X(s, t), \Pi_3(s,t)\} $, that is strictly shorter than the other grid path, then this instance cannot maximize $ \frac{\lVert \mathit{SGP_w}(s,t)\rVert}{\lVert \mathit{SP_w}(s,t)\rVert} $.
		    We will show this by finding another assignment of weights $ w' $ for the cells $ T_1, \ldots, T_5 $, such that $ \frac{\lVert\mathit{GP_{w'}}(s,t)\rVert}{\lVert \mathit{SP_{w'}}(s,t)\rVert} > \frac{\lVert\mathit{GP_w}(s,t)\rVert}{\lVert \mathit{SP_w}(s,t)\rVert} $, proving that the given instance does not provide the maximum ratio.
		
		    We first set the weight of all the cells that are not traversed by~$ X(s,t) $ to infinity. This way, we ensure that when modifying the weights of some cells, the combinatorial structure of the shortest path is preserved. Let $ T_{\ell} $ be the cell that shares the edge of $ \Pi_3(s,t) $ with $ T_{3} $, see Figure~\ref{fig:29}. Recall that $ \omega_i $ is the weight of the triangular cell $ T_i $. Then, the weighted length of the crossing path $ X(s,t) $ along the edges of $ T_3 $ is $ 2\min\{\omega_{2}, \omega_3\} + 2\min\{\omega_3, \omega_{4}\} $, and the weighted length of the shortcut path $ \Pi_3(s,t) $ along the edges of $ T_3 $ is $ 2\min\{\omega_{3}, \omega_{\ell}\} = 2\omega_3 $ (because $\omega_{\ell}=\infty$). Let $ u_j $ and $ u_{j+1} $ be two consecutive points where $ X(s,t) $ and $ \mathit{SP_{w}}(s,t) $ coincide. Let $ [p, q] $, and $ [p', q'] $ be, respectively, the edges containing $ u_j $ and $ u_{j+1} $, where $ p, p' \in T_{\ell} $.
		
		    \begin{itemize}
		        \item If $\mathit{GP_w}(s,t) = X(s,t) $ then $ \lVert X(s,t) \rVert < \lVert \Pi_3(s,t) \rVert $, and we have that
		        \begin{equation}
		        \label{eq:minima_inequality}
		            \min\{\omega_{2}, \omega_3\} + \min\{\omega_{3}, \omega_{4}\} < \omega_3.
		        \end{equation}
		        \begin{itemize}
		            \item If $ \omega_{3} \leq \omega_{2} $, then $ \omega_3 + \min\{\omega_{3}, \omega_{4}\} < \omega_3 $, which is not possible since $ \min\{\omega_{3}, \omega_{4}\} \geq 0 $. Hence, $ \omega_{3} > \omega_{2} $.
		            \item If $ \omega_{3} \leq \omega_{4} $, then $ \min\{\omega_{2}, \omega_{3}\} + \omega_{4} < \omega_3 $, which is not possible since $ \min\{\omega_{2}, \omega_{3}\} \geq 0 $. Hence, $ \omega_{3} > \omega_{4} $.
		        \end{itemize}
		        These two facts together with Equation~(\ref{eq:minima_inequality}) imply that $ \omega_{2} + \omega_{4} < \omega_3 $. We also have that 
		        \begin{equation*}
		           \frac{\lVert \mathit{GP_w}(s,t) \rVert}{\lVert \mathit{SP_w}(s,t) \rVert} = \frac{\lVert X(s,p) \rVert + 2(\omega_{2} + \omega_{4}) + \lVert X(p',t) \rVert}{\lVert \mathit{SP_w}(s,u_{j}) \rVert + |u_ju_{j+1}|\omega_3 + \lVert \mathit{SP_w}(u_{j+1},t) \rVert}
		        \end{equation*}
		        and we know that $ |u_ju_{j+1}| > 0 $, so we can decrease the weight $ \omega_3 $, increasing the ratio.
		
		        \item Otherwise, if $\mathit{GP_w}(s,t) = \Pi_3(s,t) $ then $ \lVert \Pi_3(s,t) \rVert < \lVert X(s,t) \rVert $, and we have that $ \omega_3 < \min\{\omega_{2}, \omega_3\} + \min\{\omega_{3}, \omega_{4}\} $. We also have that
		        \begin{equation*}
		           \frac{\lVert \mathit{GP_w}(s,t) \rVert}{\lVert \mathit{SP_w}(s,t) \rVert} = \frac{\lVert \Pi_3(s,p) \rVert + 2\omega_{3} + \lVert \Pi_3(p',t) \rVert}{\lVert \mathit{SP_w}(s,u_{j}) \rVert + \lVert \mathit{SP_w}(u_{j}, u_{j+1}) \rVert + \lVert \mathit{SP_w}(u_{j+1},t) \rVert}.
		        \end{equation*}
		
		        The ratio $ \frac{\lVert \mathit{GP_w}(s,t) \rVert}{\lVert \mathit{SP_w}(s,t) \rVert} $ is a strictly monotonic function for every $ \omega_k $ \cite{chew1984pseudolinearity, RAPCSAK1991353}. So, if this function is decreasing in the direction of $ \omega_3 $, we can decrease the weight $ \omega_3 $. Otherwise, we can increase the weight $ \omega_3 $. In both cases, we can increase the ratio.
		
		    \end{itemize}
		    Thus, we found another weight assignment $ w' $ such that $ \frac{\lVert \mathit{GP_{w'}}(s,t) \rVert}{\lVert SP_{w'}(s,t) \rVert} > \frac{\lVert \mathit{GP_w}(s,t) \rVert}{\lVert \mathit{SP_{w}}(s,t) \rVert} $. In addition, the change in~$\omega_3$ can be as small as needed so that the weighted length of $ \mathit{GP_{w'}}(s,t)$ is not larger than that of the other grid path in the set $ \{X(s, t), \Pi_3(s,t)\} $ with the new weight assignment~$w'$ and, thus, it does not change which of the two grid paths in the set is the shorter one.
		\end{proof}
		
		Now, we have all the tools needed to obtain an upper bound on the ratio $ \frac{\lVert X(u_j, u_{j+1})\rVert}{\lVert \mathit{SP_w}(u_j, u_{j+1}) \rVert} $ for a $P_2$-triple.
		Lemma~\ref{lem:26} presents an upper bound on this ratio where $ u_j, u_{j+1} \in T_3 $ are two consecutive points where $ X(s,t) $ and $ \mathit{SP_w}(s, t) $ coincide, and $ \lVert X(s,t) \rVert = \lVert \Pi_3(s,t) \rVert $. Since the exact shape of $ \mathit{SP_w}(s,t) $ is unknown, when computing the ratio in Lemma~\ref{lem:26}, we will maximize the ratio for any position of the points $ u_j $ and $ u_{j+1} $.
		
		\begin{lemma}
			\label{lem:26}
			Let $ u_j, u_{j+1} \in T_3 $, be two consecutive points where a shortest path~$ \mathit{SP_w}(s,t) $ and the crossing path $ X(s,t) $ coincide. If $ u_j, \ u_{j+1} $ induce a weakly simple polygon of type~$ P_2 $, and $ \lVert X(s,t) \rVert = \lVert \Pi_3(s,t) \rVert $, then $ \frac{\lVert X(u_j, u_{j+1})\rVert}{\lVert \mathit{SP_w}(u_j, u_{j+1}) \rVert} \leq \frac{2}{\sqrt{3}} $.
		\end{lemma}
		
		\begin{proof}
			 Let $ (v^3_1, v^3_2, v^3_3) $ be the sequence of vertices on the boundary of $ T_3 $ in clockwise order. Since $ u_j $ and $ u_{j+1} $ are two consecutive points where $ \mathit{SP_w}(s,t) $ and $ X(s,t) $ coincide, and they induce a polygon of type~$ P_2 $, $ \mathit{SP_w}(s,t) $ enters $ T_3 $ from cell $ T_{2} $ through $ u_j $, and $ \mathit{SP_w}(s,t) $ leaves~$ T_3 $ and enters cell~$ T_{4} $ through~$ u_{j+1} $, see Figure~\ref{fig:65}. Suppose, without loss of generality, that $ u_j \in [v^3_1, v^3_2] $ and $ u_{j+1} \in [v^3_2, v^3_3] $. Let $ a, b, c $ be the lengths $ |u_jv^3_2|, |v^3_2u_{j+1}| $, and $ |u_ju_{j+1}| $, respectively. According to Observation~\ref{obs:3}, $ c = \sqrt{a^2+b^2-ab} $. We want to maximize the ratio $ \frac{\lVert X(u_j, u_{j+1})\rVert}{\lVert \mathit{SP_w}(u_j, u_{j+1}) \rVert} $ for all weight assignments of $ \omega_{2}, \omega_3 $, and $ \omega_{4} $.
			
			 \begin{figure}[tb]
			     \centering
	    	    \includegraphics{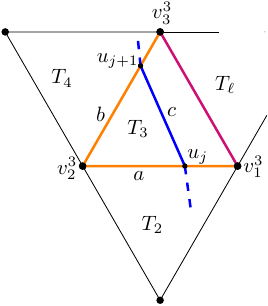}
    	        \caption{Subpaths of the grid path $ \Pi_3(s,t) $ (purple), the crossing path $ X(s,t) $ (orange), and $ \mathit{SP_w}(s,t) $ (blue) traversing a polygon of type~$ P_2 $ in a triangular tessellation.}
    	        \label{fig:65}
	\end{figure}
			
			 Traversing cell $ T_3 $ there is also the grid path $ \Pi_3(s,t) $. Let $ T_{\ell} $ be the cell that shares the edge of $ \Pi_3(s,t) $ with $ T_3 $. Since $ \lVert X(s,t)\rVert = \lVert \Pi_3(s,t)\rVert$, $ 2\min\{\omega_{2}, \omega_3\}+2\min\{\omega_3, \omega_{4}\} = 2\min\{\omega_3, \omega_{\ell}\} $. Thus, we distinguish two cases: $ b \leq a $ and $ b > a $. In the first case we take:
			\begin{equation*}
				\min\{\omega_{2}, \omega_3\} = \min\{\omega_3, \omega_{\ell}\} - \min\{\omega_3, \omega_{4}\}.
			\end{equation*}
			In the second case we take:
			
			\begin{equation*}
				\min\{\omega_3, \omega_{4}\} = \min\{\omega_3, \omega_{\ell}\} - \min\{\omega_{2}, \omega_3\}.
			\end{equation*}
			
			Then, an upper bound on the ratio $ R = \frac{\lVert X(u_j, u_{j+1})\rVert}{\lVert \mathit{SP_w}(u_j, u_{j+1})\rVert} $ in a polygon of type~$ P_2 $ is:
			
			\begin{align*}
				R & =
				\frac{a\min\{\omega_{2}, \omega_{3}\}+b\min\{\omega_3, \omega_{4}\}}{c\omega_3} \leq
				\begin{cases}
					\overbrace{\leq}^{\text{if } b \leq a} \frac{a\min\{\omega_3, \omega_{\ell}\}+(b-a)\min\{\omega_3, \omega_{4}\}}{c\omega_3} \leq \\
					\overbrace{\leq}^{\text{if } a < b} \frac{b\min\{\omega_3, \omega_{\ell}\}+(a-b)\min\{\omega_{2}, \omega_3\}}{c\omega_3} \leq
				\end{cases} \\
				& \leq \begin{cases}
					\frac{a\min\{\omega_3, \omega_{\ell}\}}{c\omega_3} \leq \frac{a}{\sqrt{a^2+b^2-ab}} \\
					\frac{b\min\{\omega_3, \omega_{\ell}\}}{c\omega_3} \leq \frac{b}{\sqrt{a^2+b^2-ab}}
				\end{cases} \leq \frac{2}{\sqrt{3}},
    		\end{align*}
    		where the last inequality in the two ratios is obtained by maximization over the values of~$ a \in [0,2] $ and $ b \in [0,2] $.
		\end{proof}

    Finally, we have all the pieces to prove our main result.

	\begin{theorem}
		\label{thm:6}
		In $ G_{6\text{corner}}, \ \frac{\lVert \mathit{SGP_w}(s, t)\rVert}{\lVert \mathit{SP_w}(s,t) \rVert} \leq \frac{2}{\sqrt{3}} $.
	\end{theorem}
	
	\begin{proof}
		Let $ \mathit{SP_w}(s,t) $ be a weighted shortest path between two corners $ s $ and $ t $ in a triangular tessellation. Let $ X(s,t) $ be the crossing path from $ s $ to $ t $ obtained from $ \mathit{SP_w}(s,t) $. By Observation~\ref{thm:1}, $ \frac{\lVert X(s,t)\rVert}{\lVert \mathit{SP_w}(s,t)\rVert} \leq \frac{\lVert X(u_j,u_{j+1})\rVert}{\lVert \mathit{SP_w}(u_j,u_{j+1})\rVert} $, over all pairs $ (u_j,u_{j+1}) $ of consecutive points where $ \mathit{SP_w}(s,t) $ and $ X(s,t) $ coincide.
		
		As already observed, the ratio $ \frac{\lVert X(u_j, u_{j+1})\rVert}{\lVert \mathit{SP_w}(u_j, u_{j+1}) \rVert} $ is $ 1 $ in a polygon of type~$ P_1 $. Further, by Lemma~\ref{lem:27}, that ratio is at most $ \frac{2}{\sqrt{3}} $ for weakly simple polygons of type~$ P_k, \ k > 2 $. Finally, using Lemmas \ref{lem:triple} and \ref{lem:8}, we know that if a path intersecting cell $ T_3 $ is the path that maximizes the ratio $ \frac{\lVert X(u_j, u_{j+1})\rVert}{\lVert \mathit{SP_w}(u_j, u_{j+1}) \rVert} $ in a weakly simple polygon of type~$ P_2$, then $ \lVert X(s,t)\rVert = \lVert \Pi_3(s,t)\rVert $. And, in this case, by Lemma~\ref{lem:26} we get that the ratio $ \frac{\lVert X(u_j,u_{j+1})\rVert}{\lVert \mathit{SP_w}(u_j,u_{j+1}) \rVert} $ is at most $ \frac{2}{\sqrt{3}} $, where $ u_j, u_{j+1} \in T_3 $.
		
		All this implies that $ \frac{\lVert X(s,t)\rVert}{\lVert \mathit{SP_w}(s,t) \rVert} $ is at most $ \frac{2}{\sqrt{3}} $. Since $ \lVert \mathit{SGP_w}(s,t)\rVert \leq \lVert X(s,t)\rVert $, we have that $ \frac{\lVert \mathit{SGP_w}(s, t)\rVert}{\lVert \mathit{SP_w}(s,t) \rVert} \leq \frac{2}{\sqrt{3}} $.
	\end{proof}
	
	Figure~\ref{fig:67} provides an illustration of the lower bound $ \frac{2}{\sqrt{3}} $ on the ratio between the weighted shortest grid path $ \mathit{SGP_w}(s,t) $ (red) and the weighted shortest path $ \mathit{SP_w}(s,t) $ (blue) claimed by Nash~\cite{Nash}. Hence, the upper bound in Theorem~\ref{thm:6} is tight for $ G_{6\text{corner}} $.

    \begin{figure}[tb]
	    \centering
	    \includegraphics{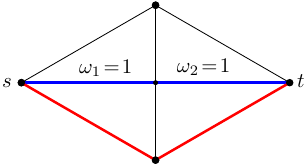}
	    \caption{$ \mathit{SP_w}(s,t) $ and $ \mathit{SGP_w}(s,t) $ are depicted in blue and red, respectively. The ratio $ \frac{\lVert \mathit{SGP_w}(s,t)\rVert}{\lVert \mathit{SP_w}(s,t)\rVert} $ is $ \frac{2}{\sqrt{3}} $.}
	    \label{fig:67}
    \end{figure}

\section{Ratios $ \frac{\lVert \mathit{SGP_w}(s,t)\rVert}{\lVert \mathit{SVP_w}(s,t)\rVert} $ in $ G_{6\text{corner}} $ and $ \frac{\lVert \mathit{SVP_w}(s,t)\rVert}{\lVert \mathit{SP_w}(s,t)\rVert} $ in $ G_{\text{corner}} $}
	
	In this section we provide results for the ratios where the weighted shortest vertex path $ \mathit{SVP_w}(s,t) $ is involved, i.e., $ \frac{\lVert \mathit{SGP_w}(s,t)\rVert}{\lVert \mathit{SVP_w}(s,t)\rVert} $ and $ \frac{\lVert \mathit{SVP_w}(s,t)\rVert}{\lVert \mathit{SP_w}(s,t)\rVert} $. The length of a weighted shortest vertex path $ \mathit{SVP_w}(s,t) $ is an upper bound for the length of a weighted shortest path $ \mathit{SP_w}(s,t) $, so the upper bound on the ratio $ \frac{\lVert \mathit{SGP_w}(s, t)\rVert}{\lVert \mathit{SP_w}(s,t) \rVert} $ obtained in Theorem~\ref{thm:6} is an upper bound for $ \frac{\lVert \mathit{SGP_w}(s, t)\rVert}{\lVert \mathit{SVP_w}(s,t) \rVert} $.
	
	\begin{corollary}
		\label{cor:6}
		In $ G_{6\text{corner}}, \ \frac{\lVert \mathit{SGP_w}(s, t)\rVert}{\lVert \mathit{SVP_w}(s,t) \rVert} \leq \frac{2}{\sqrt{3}} $.
	\end{corollary}
	
	When the weights of the cells are in the set $ \{1, \infty\} $, the ratio $ \frac{\lVert \mathit{SGP_w}(s, t)\rVert}{\lVert \mathit{SVP_w}(s,t) \rVert} $ was proved to be at most $ \frac{2}{\sqrt{3}} $ by Nash~\cite{Nash}. In addition, Nash showed that this bound is tight. Thus, for general (non-negative) weights this value is a lower bound on the ratio $ \frac{\lVert \mathit{SGP_w}(s, t)\rVert}{\lVert \mathit{SVP_w}(s,t) \rVert} $ for $ G_{6\text{corner}} $.
	
	
	As a corollary of Theorem~\ref{thm:6}, we obtain Corollary~\ref{cor:5}. The result comes from the fact that $ \lVert \mathit{SVP_w}(s,t) \rVert $ is a lower bound for $ \lVert \mathit{SGP_w}(s,t) \rVert $. Recall that $ \mathit{SVP_w}(s, t) $ and $ \mathit{SP_w}(s, t) $ do not use $ G_{6\text{corner}} $, but $ G_{\text{corner}} $.
	
	\begin{corollary}
		\label{cor:5}
		In $ G_{\text{corner}} $, $ \frac{\lVert \mathit{SVP_w}(s, t)\rVert}{\lVert \mathit{SP_w}(s,t) \rVert} \leq \frac{2}{\sqrt{3}} \approx 1.15 $.
	\end{corollary}
	
	\begin{figure}[tb]
	    \centering
	    \includegraphics{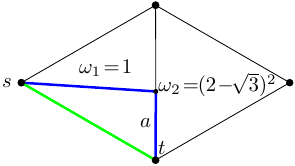}
	    \caption{$ \mathit{SP_w}(s,t) $ and $ \mathit{SVP_w}(s,t) $ are depicted in blue and green, respectively. The ratio $ \frac{\lVert \mathit{SVP_w}(s,t)\rVert}{\lVert \mathit{SP_w}(s,t)\rVert} $ is $ \frac{2\sqrt{7\sqrt{3}-12}}{(7-4\sqrt{3})(6\sqrt{2}+\sqrt{7\sqrt{3}-12})} $ when $ a = \frac{4-(4\sqrt{6}+7\sqrt{2})\sqrt{1351\sqrt{3}-2340}}{4} \approx 0.87 $.}
	    \label{fig:31}
    \end{figure}
	
	Finally, we provide a lower bound for the ratio $ \frac{\lVert \mathit{SVP_w}(s, t)\rVert}{\lVert \mathit{SP_w}(s,t) \rVert} $. The green path in Figure~\ref{fig:31} is a weighted shortest vertex path~$ \mathit{SVP_w}(s,t) $ between vertices $ s $ and $ t $, thus, we have the following result.
	
	\begin{observation}
	    \label{obs:5}
	    In $ G_{\text{corner}} $, $ \frac{\lVert \mathit{SVP_w}(s, t)\rVert}{\lVert \mathit{SP_w}(s,t) \rVert} \geq  \frac{2\sqrt{7\sqrt{3}-12}}{(7-4\sqrt{3})(6\sqrt{2}+\sqrt{7\sqrt{3}-12})} \approx 1.11 $.
	\end{observation}

\section{Discussion and future work}

We presented bounds on the ratio between the lengths of three types of weighted shortest paths in a triangular tessellation.
The fact that a compact grid graph such as $G_{6\text{corner}}$ guarantees an error bound of $\approx 15\%$, regardless of weights used, justifies its widespread use in applications in areas such as gaming and simulation, where performance is a priority over accuracy.

Our analysis techniques, presented here for triangular grids, can also be applied to obtain upper bounds for the same ratios in the other two types of regular tessellations, square and hexagonal. In particular, we recently proved upper bounds of $ R = \frac{2}{\sqrt{2+\sqrt{2}}} $ for weighted square cells \cite{SquaresEuroCG}, and $ R = \frac{3}{2} $ when we only allow movement along the edges of an hexagonal tessellation~\cite{HexagonsYRF}.
The main differences lie in the exact definition of the crossing paths and the weakly simple polygons.
Our techniques can also be used to derive upper bounds for another type of grid graph, where the vertices are cell centers instead of corners (see, e.g.,~\cite{bailey2015path,Bound3,Nash}).

For future work, it would be interesting to close the gap for $\frac{\lVert \mathit{SVP_w}(s,t)\rVert}{\lVert \mathit{SP_w}(s,t)\rVert}$, of approximately $ 0.04 $.
It is an intriguing question whether the seemingly richer any-angle path $\mathit{SVP_w}(s,t)$ can actually guarantee a better quality factor than $G_{6\text{corner}}$. However, our results show that, even if that is the case, the improvement is rather negligible.

\vspace{15mm}

{\small \noindent \textbf{Acknowledgments}}

P. B. is partially supported by NSERC. G. E., D. O. and R. I. S. are partially supported by H2020-MSCA-RISE project 734922 - CONNECT and project PID2019-104129GB-I00 funded by MCIN/AEI/10.13039/501100011033. G. E. and D. O. are also supported by PIUAH21/IA-062 and CM/JIN/2021-004. G. E. is also funded by an FPU of the Universidad de Alcal\'a.

The authors want to thank Vsevolod Kvachev for granting permission to use the screenshots of the game ``Colossal Citadels''.

The authors thank the anonymous referees for their comments and insights which greatly improved the readability of this article.




\begin{thebibliography}{10}

\bibitem{Aleksandrov}
L.~Aleksandrov, M.~Lanthier, A.~Maheshwari, and J.-R. Sack.
\newblock An $\varepsilon$-approximation algorithm for weighted shortest paths
  on polyhedral surfaces.
\newblock In {\em Scandinavian Workshop on Algorithm Theory}, pages 11--22.
  Springer, 1998.

\bibitem{Aleksandrov2}
L.~Aleksandrov, A.~Maheshwari, and J.-R. Sack.
\newblock Approximation algorithms for geometric shortest path problems.
\newblock In {\em Proceedings of the thirty-second annual ACM symposium on
  Theory of computing}, pages 286--295, 2000.

\bibitem{Aleksandrov3}
L.~Aleksandrov, A.~Maheshwari, and J.-R. Sack.
\newblock Determining approximate shortest paths on weighted polyhedral
  surfaces.
\newblock {\em Journal of the ACM}, 52(1):25--53, 2005.

\bibitem{Ammar}
A.~Ammar, H.~Bennaceur, I.~Ch\v{a}ari, A.~Koub\v{a}a, and M.~Alajlan.
\newblock Relaxed {D}ijkstra and $ {A}^{*} $ with linear complexity for robot
  path planning problems in large-scale grid environments.
\newblock {\em Soft Computing}, 20(10):4149--4171, 2016.

\bibitem{bachmann2018multi}
D.~Bachmann, F.~B{\"o}kler, J.~Kopec, K.~Popp, B.~Schwarze, and F.~Weichert.
\newblock Multi-objective optimisation based planning of power-line grid
  expansions.
\newblock {\em ISPRS International Journal of Geo-Information}, 7(7):258, 2018.

\bibitem{bailey2021path}
J.~P. Bailey, A.~Nash, C.~A. Tovey, and S.~Koenig.
\newblock Path-length analysis for grid-based path planning.
\newblock {\em Artificial Intelligence}, 301:103560, 2021.

\bibitem{bailey2015path}
J.~P. Bailey, C.~Tovey, T.~Uras, S.~Koenig, and A.~Nash.
\newblock Path planning on grids: The effect of vertex placement on path
  length.
\newblock In {\em Eleventh Artificial Intelligence and Interactive Digital
  Entertainment Conference}, 2015.

\bibitem{HexagonsYRF}
P.~Bose, G.~Esteban, D.~Orden, and R.~I. Silveira.
\newblock On approximating shortest paths in weighted hexagonal tessellations.
\newblock In {\em Abstracts of the Computational Geometry: Young Researchers
  Forum 2022 (CG:YRF)}, pages 27--31, 2022.

\bibitem{SquaresEuroCG}
P.~Bose, G.~Esteban, D.~Orden, and R.~I. Silveira.
\newblock Spanning ratio of shortest paths in weighted square tessellations.
\newblock In {\em Abstracts of the 38th European Workshop on Computational
  Geometry (EuroCG)}, pages 65:1--8, 2022.

\bibitem{carsten20063d}
J.~Carsten, D.~Ferguson, and A.~Stentz.
\newblock 3{D} {F}ield {$D^*$}: Improved path planning and replanning in three
  dimensions.
\newblock In {\em 2006 IEEE/RSJ international conference on intelligent robots
  and systems}, pages 3381--3386. IEEE, 2006.

\bibitem{carsten2009global}
J.~Carsten, A.~Rankin, D.~Ferguson, and A.~Stentz.
\newblock Global planning on the {M}ars exploration rovers: Software
  integration and surface testing.
\newblock {\em Journal of Field Robotics}, 26(4):337--357, 2009.

\bibitem{Chang}
H.~C. Chang, J.~Erickson, and C.~Xu.
\newblock Detecting weakly simple polygons.
\newblock In {\em Proceedings of the twenty-sixth annual ACM-SIAM Symposium on
  Discrete Algorithms}, pages 1655--1670. SIAM, 2014.

\bibitem{ChenKT00}
D.~Z. Chen, K.~S. Klenk, and H.~Y. Tu.
\newblock Shortest path queries among weighted obstacles in the rectilinear
  plane.
\newblock {\em {SIAM} J. Comput.}, 29(4):1223--1246, 2000.

\bibitem{ChengJV15}
S.~W. Cheng, J.~Jin, and A.~Vigneron.
\newblock Triangulation refinement and approximate shortest paths in weighted
  regions.
\newblock In P.~Indyk, editor, {\em Proceedings of the Twenty-Sixth Annual
  {ACM-SIAM} Symposium on Discrete Algorithms, {SODA} 2015, San Diego, CA, USA,
  January 4-6, 2015}, pages 1626--1640. {SIAM}, 2015.

\bibitem{chew1984pseudolinearity}
K.~L. Chew and E.~U. Choo.
\newblock Pseudolinearity and efficiency.
\newblock {\em Mathematical Programming}, 28(2):226--239, 1984.

\bibitem{Lou}
J.-L. {De Carufel}, C.~Grimm, A.~Maheshwari, M.~Owen, and M.~Smid.
\newblock A note on the unsolvability of the weighted region shortest path
  problem.
\newblock {\em Computational Geometry}, 47(7):724--727, 2014.

\bibitem{floriani}
L.~de~Floriani, P.~Magillo, and E.~Puppo.
\newblock Applications of computational geometry to geographic information
  systems.
\newblock {\em Handbook of computational geometry}, 7:333--388, 2000.

\bibitem{garcia2013dynamic}
M.~Garcia, A.~Viguria, and A.~Ollero.
\newblock Dynamic graph-search algorithm for global path planning in presence
  of hazardous weather.
\newblock {\em Journal of Intelligent \& Robotic Systems}, 69(1):285--295,
  2013.

\bibitem{gaw}
D.~Gaw and A.~Meystel.
\newblock Minimum-time navigation of an unmanned mobile robot in a 2-1/2{D}
  world with obstacles.
\newblock In {\em Proceedings of the 1986 IEEE International Conference on
  Robotics and Automation}, volume~3, pages 1670--1677. IEEE, 1986.

\bibitem{GewaliMMN90}
L.~Gewali, A.~C. Meng, J.~S.~B. Mitchell, and S.~C. Ntafos.
\newblock Path planning in 0/1/$\infty$ weighted regions with applications.
\newblock {\em {INFORMS} J. Comput.}, 2(3):253--272, 1990.

\bibitem{hanssen2014least}
F.~Hanssen, R.~May, J.~Thomassen, and K.~Bevanger.
\newblock A least cost path ({L}{C}{P}) toolbox for optimal routing of high
  voltage power lines for a sustainable future.
\newblock In {\em 10th Int. Symp. Environmental Concerns in Rights-of-Way
  Management}, pages 181--186. USA Utility Arborist Association Champaign, IL,
  2014.

\bibitem{harabor2016optimal}
D.~D. Harabor, A.~Grastien, D.~{\"O}z, and V.~Aksakalli.
\newblock Optimal any-angle pathfinding in practice.
\newblock {\em Journal of Artificial Intelligence Research}, 56:89--118, 2016.

\bibitem{hew2017length}
P.~C. Hew.
\newblock The length of shortest vertex paths in binary occupancy grids
  compared to shortest r-constrained ones.
\newblock {\em Journal of Artificial Intelligence Research}, 59:543--563, 2017.

\bibitem{Bound3}
N.~S. Jaklin.
\newblock {\em On Weighted Regions and Social Crowds: Autonomous-agent
  Navigation in Virtual Worlds}.
\newblock PhD thesis, Utrecht University, 2016.

\bibitem{kamphuis}
A.~Kamphuis, M.~Rook, and M.~H. Overmars.
\newblock Tactical path finding in urban environments.
\newblock In {\em First International Workshop on Crowd Simulation}. Citeseer,
  2005.

\bibitem{MetalGear}
Konami.
\newblock Metal {G}ear.
\newblock PC, NES, Mobile phone, 1987.

\bibitem{kramm2018suboptimality}
B.~Kramm, N.~Rivera, C.~Hern{\'a}ndez, and J.~A. Baier.
\newblock A suboptimality bound for {$2^k$} grid path planning.
\newblock In {\em Eleventh Annual Symposium on Combinatorial Search}, 2018.

\bibitem{ColossalCitadels}
V.~Kvachev.
\newblock Colossal {C}itadels.
\newblock \url{http://colossalcitadels.com}.
\newblock Accessed: 2022-04-05.

\bibitem{li2021pairwise}
J.~Li, D.~Harabor, P.~J. Stuckey, H.~Ma, G.~Gange, and S.~Koenig.
\newblock Pairwise symmetry reasoning for multi-agent path finding search.
\newblock {\em Artificial Intelligence}, 301:103574, 2021.

\bibitem{mitchell1988algorithmic}
J.~S.~B. Mitchell.
\newblock An algorithmic approach to some problems in terrain navigation.
\newblock {\em Artificial Intelligence}, 37(1-3):171--201, 1988.

\bibitem{Mitchell2}
J.~S.~B. Mitchell and C.~Papadimitrou.
\newblock The weighted region problem: Finding shortest paths through a
  weighted planar subdivision.
\newblock {\em Journal of the ACM}, 38(1):18--73, 1991.

\bibitem{nagy}
B.~N. Nagy.
\newblock Shortest paths in triangular grids with neighbourhood sequences.
\newblock {\em Journal of Computing and Information Technology},
  11(2):111--122, 2003.

\bibitem{Nash}
A.~Nash.
\newblock {\em Any-Angle Path Planning}.
\newblock PhD thesis, University of Southern California, 2012.

\bibitem{nash2007theta}
A.~Nash, K.~Daniel, S.~Koenig, and A.~Felner.
\newblock Theta*: Any-angle path planning on grids.
\newblock In {\em AAAI}, volume~7, pages 1177--1183, 2007.

\bibitem{papadakis2013terrain}
P.~Papadakis.
\newblock Terrain traversability analysis methods for unmanned ground vehicles:
  A survey.
\newblock {\em Engineering Applications of Artificial Intelligence},
  26(4):1373--1385, 2013.

\bibitem{piveteau2017novel}
N.~Piveteau, J.~Schito, M.~Raubal, and R.~Weibel.
\newblock {\em A novel approach to the routing problem of overhead transmission
  lines}.
\newblock PhD thesis, Geographisches Institut der Universit{\"a}t Z{\"u}rich,
  2017.

\bibitem{RAPCSAK1991353}
T.~Rapcsák.
\newblock On pseudolinear functions.
\newblock {\em European Journal of Operational Research}, 50(3):353--360, 1991.

\bibitem{rowe}
N.~C. Rowe and R.~S. Ross.
\newblock Optimal grid-free path planning across arbitrarily contoured terrain
  with anisotropic friction and gravity effects.
\newblock {\em IEEE Transactions on Robotics and Automation}, 6(5):540--553,
  1990.

\bibitem{santos2019optimizing}
A.~H.~M. Santos, R.~M. de~Lima, C.~R.~S. Pereira, R.~Osis, G.~O.~S. Medeiros,
  A.~R. de~Queiroz, B.~K. Flauzino, A.~R. P.~C. Cardoso, L.~C. Junior, R.~A.
  dos Santos, et~al.
\newblock Optimizing routing and tower spotting of electricity transmission
  lines: An integration of geographical data and engineering aspects into
  decision-making.
\newblock {\em Electric Power Systems Research}, 176:105953, 2019.

\bibitem{seegmiller2021method}
L.~Seegmiller, T.~Shirabe, and C.~D. Tomlin.
\newblock A method for finding least-cost corridors with reduced distortion in
  raster space.
\newblock {\em International Journal of Geographical Information Science},
  35(8):1570--1591, 2021.

\bibitem{Sharir}
M.~Sharir and S.~Sifrony.
\newblock Coordinated motion planning for two independent robots.
\newblock {\em Annals of Mathematics and Artificial Intelligence},
  3(1):107--130, 1991.

\bibitem{shen2022fast}
B.~Shen, M.~A. Cheema, D.~D. Harabor, and P.~J. Stuckey.
\newblock Fast optimal and bounded suboptimal {E}uclidean pathfinding.
\newblock {\em Artificial Intelligence}, 302:103624, 2022.

\bibitem{sturtevant2}
N.~R. Sturtevant, D.~Sigurdson, B.~Taylor, and T.~Gibson.
\newblock Pathfinding and abstraction with dynamic terrain costs.
\newblock In {\em Proceedings of the AAAI Conference on Artificial Intelligence
  and Interactive Digital Entertainment}, volume~15, pages 80--86, 2019.

\bibitem{tran2020computing}
N.~Tran, M.~J. Dinneen, and S.~Linz.
\newblock Computing close to optimal weighted shortest paths in practice.
\newblock In {\em Proceedings of the International Conference on Automated
  Planning and Scheduling}, volume~30, pages 291--299, 2020.

\bibitem{AssassinsCreed}
{Ubisoft Montreal}.
\newblock Assassin's {C}reed.
\newblock PC, Mac, Xbox 360, PlayStation 3, 2007.

\bibitem{van2016comparative}
W.~van Toll, R.~Triesscheijn, M.~Kallmann, R.~Oliva, N.~Pelechano,
  J.~Pettr{\'e}, and R.~Geraerts.
\newblock A comparative study of navigation meshes.
\newblock In {\em Proceedings of the 9th International Conference on Motion in
  Games}, pages 91--100, 2016.

\bibitem{wagner2015subdimensional}
G.~Wagner and H.~Choset.
\newblock Subdimensional expansion for multirobot path planning.
\newblock {\em Artificial Intelligence}, 219:1--24, 2015.

\bibitem{yap2011any}
P.~K.~Y. Yap, N.~Burch, R.~C. Holte, and J.~Schaeffer.
\newblock Any-angle path planning for computer games.
\newblock In {\em Seventh Artificial Intelligence and Interactive Digital
  Entertainment Conference}, 2011.

\end{thebibliography}

\end{document}